\documentclass[conference]{IEEEtran}
\IEEEoverridecommandlockouts

\usepackage{cite}
\usepackage{amsmath,amssymb,amsfonts}
\usepackage{amsthm}
\usepackage{mathrsfs}
\usepackage{graphicx}
\usepackage{textcomp}
\usepackage{xcolor}
\usepackage{framed}
\usepackage{bbm}
\usepackage{manyfoot}
\usepackage{booktabs}
\usepackage{algorithm}
\usepackage{algorithmicx}
\usepackage{algpseudocode}
\usepackage{listings}
\usepackage{boxedminipage}
\usepackage{multirow}
\usepackage{dsfont}
\usepackage{makecell}
\usepackage{tabularx} % 表格美化+符号必备

\usepackage[top=0.8in, bottom=0.8in, left=0.75in, right=0.75in]{geometry}

\newtheorem{theorem}{Theorem}

\newtheorem{definition}{Definition}
\newtheorem{remark}{Remark}

\def\BibTeX{{\rm B\kern-.05em{\sc i\kern-.025em b}\kern-.08em
    T\kern-.1667em\lower.7ex\hbox{E}\kern-.125emX}}

\begin{document}

\title{Information-Theoretic Authenticated PIR: \\ From PIR-RV To APIR
\thanks{Pengzhen Ke, Yuxuan Qin, and Liang Feng Zhang are with the School of Information Science and Technology, ShanghaiTech University, Shanghai, China. Email: $\{$kepzh, qinyx2022, zhanglf$\}$@shanghaitech.edu.cn. These authors contributed equally: Pengzhen Ke, Yuxuan Qin.
This work was supported in part by the National Natural Science Foundation of China (No. 62372299).
% This work was supported in part by 
}
}
\author{\IEEEauthorblockN{Pengzhen Ke,~~ Yuxuan Qin,~~ Liang Feng Zhang}
}

\maketitle

\begin{abstract}
% THIS PAPER IS ELIGIBLE FOR THE STUDENT PAPER AWARD. 
Private Information Retrieval (PIR) allows clients to retrieve database entries without leaking retrieval indices, yet malicious servers seriously compromise retrieval correctness. Existing Authenticated PIR (APIR) schemes resist selective-failure attacks but rely on computational hardness assumptions. In contrast, information-theoretic PIR with Result Verification (itPIR-RV) achieves integrity without computational assumptions, yet only provides relaxed query privacy with no defense against selective-failure attacks.
This paper focuses on unconditionally secure information-theoretic APIR (itAPIR) constructions. We propose the rigorous information-theoretic security definition for itAPIR with statistical privacy against selective-failure attacks and integrity as core properties, formalize the hierarchical relation between itAPIR and itPIR-RV as a relaxed variant with identical integrity but basic query privacy, and prove a conversion theorem that valid itPIR-RV schemes can be directly upgraded to secure itAPIR with no extra overhead. 
Our work bridges the theoretical gap, simplifies itAPIR design, and enables quantum-resistant PIR in malicious server environments.
\end{abstract}

\section{Introduction}

Private Information Retrieval (PIR) \cite{chor1998private} allows a client to retrieve a target database block \( x_{\alpha} \in {\bf x} = (x_1, \ldots, x_n) \) while preserving the privacy of the retrieval index \( \alpha \in [n] \) from the involved servers. 
For large-scale databases, a naive PIR scheme incurring $O(n)$ communication complexity is impractical.
Thus, multi-server information-theoretic PIR has become a leading research direction \cite{beimel2001information, beimel2002robust, yang2002private, goldberg2007improving, yekhanin2008towards, efremenko20093, devet2012optimally, zhang2014verifiable, sun2017capacity, sun2017capacity2, banawan2018capacity, kurosawa2019correct, zhao2021verifiable, ben2022verifiable, eriguchi2022optimal, ke2022two, zhang2022byzantine, zhu2022post, colombo2023authenticated, ke2023private, kruglik2023two, eriguchi2024efficient, alon2025definition, li2025efficient, zhang2025unified, ke2026efficient }. In this model, the database is replicated across multiple servers, 
{
and the notion of \( t \)-privacy ensures that no coalition of up to \( t \) servers can infer the client's retrieval index \( \alpha \).
}

While early PIR protocols typically assume \emph{honest-but-curious} servers, a critical challenge in practical multi-server PIR deployments lies in addressing \emph{malicious} servers. Such servers may collude with each other to return incorrect responses (due to active attacks, network anomalies, or outdated data), aiming to deceive the client into reconstructing an erroneous value rather than the target $x_\alpha$. This concern has spurred extensive research on PIR protocols resilient to malicious servers 
\cite{beimel2002robust, yang2002private, goldberg2007improving, devet2012optimally, zhang2014verifiable, sun2017capacity2, banawan2018capacity, kurosawa2019correct, zhao2021verifiable, ben2022verifiable, ke2022two, zhang2022byzantine, zhu2022post, colombo2023authenticated, ke2023private, kruglik2023two, alon2025definition, ke2026efficient}, which are of particular interest in cloud computing scenarios.

Among existing solutions for malicious server tolerance, two representative types of verifiable PIR protocols stand out, both equipped with \emph{integrity} guarantees (i.e., they ensure the client can either recover the correct $x_\alpha$ or output $\perp$ to indicate invalid responses): Authenticated PIR (APIR) and PIR with Result Verification (PIR-RV). As a state-of-the-art error-detecting PIR scheme, APIR \cite{colombo2023authenticated} achieves subpolynomial communication complexity via Distributed Point Functions (DPFs) and features a strengthened privacy guarantee—resistance against \emph{selective-failure attacks}. Specifically, APIR’s privacy holds even if colluding malicious servers learn whether the client accepts the result (outputs $x_\alpha$) or rejects it (outputs $\perp$). In contrast, PIR-RV \cite{ke2022two, ke2023private}, a relaxed variant of APIR, does not require resistance to selective-failure attacks; its privacy guarantee only ensures that colluding servers cannot infer $\alpha$ from query distributions alone, without accounting for the client’s acceptance/rejection outcome.

Notably, existing APIR schemes are solely constructed under the \emph{computational security} model, relying on cryptographic hardness assumptions (e.g., DPF security under computational assumptions). This limits their applicability in high-security scenarios (e.g., secure multi-party computation, quantum-resistant systems) where computational assumptions may be compromised. To address this gap, this work focuses on constructing \emph{information-theoretic (it-) APIR} schemes—schemes that achieve unconditional security without relying on any computational assumptions. As a core contribution, we further establish a formal connection between itPIR-RV and itAPIR: we prove that an itPIR-RV scheme satisfying specific integrity conditions can be upgraded to an itAPIR scheme, thereby providing a constructive path to realizing information-theoretically secure APIR.

\subsection{Related Works}
\subsubsection{Information-Theoretic PIR}
Multi-server information-theoretic PIR (itPIR) was first formalized by Chor et al. \cite{chor1998private}, with subsequent works \cite{beimel2001information, yekhanin2008towards, efremenko20093} optimizing its communication complexity to sublinear or even polylogarithmic levels. These schemes achieve unconditional $t$-privacy but assume honest-but-curious servers, lacking resilience to malicious responses.

\subsubsection{Malicious-Resilient PIR}
To tolerate malicious servers, verifiable PIR schemes \cite{beimel2002robust, yang2002private, goldberg2007improving, devet2012optimally, zhang2014verifiable, sun2017capacity2, banawan2018capacity, kurosawa2019correct, zhao2021verifiable, ben2022verifiable, ke2022two, zhang2022byzantine, zhu2022post, eriguchi2022optimal, colombo2023authenticated, ke2023private,alon2025definition} introduce integrity guarantees, enabling the client to detect or even correct incorrect responses. 
Among them, itPIR-RV \cite{ke2022two, eriguchi2022optimal, ke2023private} achieves information-theoretic $(v,\epsilon)$-integrity { (i.e., the probability that the client accepts an incorrect response is at most $\epsilon$, enven if up to $v$ servers collude maliciously)} and only guarantees basic perfect or statistical privacy; APIR \cite{colombo2023authenticated} provides strengthened privacy with resistance to selective-failure attacks but relies on computational assumptions, for which no information-theoretic variant has been formally defined.

\subsubsection{APIR vs. PIR-RV}
Existing works treat APIR and PIR-RV as independent primitives, but their theoretical connection (especially in the information-theoretic setting) remains unstudied—this gap limits the construction of high-security itAPIR schemes.

{
% \color{blue}
\subsection{Our Contributions}
This work makes four key contributions to information-theoretic malicious-resilient PIR:
\begin{itemize}
    \item We propose information-theoretic Authenticated PIR (itAPIR), requiring statistical $t$-privacy against selective-failure attacks and $(v,\epsilon)$-integrity against malicious servers.

    \item We clarify the relation between itAPIR and itPIR-RV, showing that itPIR-RV is a relaxed variant of itAPIR with the same integrity guarantee but without selective-failure-attack resilience.

    \item We prove a conversion theorem: any perfect/statistical $t$-private itPIR-RV scheme with $(v,\epsilon)$-integrity, where $v \ge t$ and $\epsilon$ is negligible, can be transformed into a statistically $t$-private itAPIR scheme.

    \item We instantiate the theorem with existing efficient itPIR-RV schemes to obtain two concrete itAPIR constructions for small-server and general-server settings, both with sublinear communication complexity.
\end{itemize}
To highlight the differences between prior schemes and our results, Table~\ref{tab:apir-comparison} summarizes representative malicious-resilient PIR constructions.
}

\begin{table}[t]
\centering
\caption{Comparison of representative malicious-resilient PIR schemes and our constructions. 
All schemes assume $\ell$ servers and tolerate $t$ corrupted servers for privacy.}
\label{tab:apir-comparison}
\renewcommand{\arraystretch}{1.2}

\begin{tabularx}{\columnwidth}{@{}*{4}{>{\centering\arraybackslash}X}@{}}

\toprule
{Scheme}
& \raisebox{0.3ex}{\thead{Corrupted Servers\\for Integrity ($v$)}} 
& \raisebox{0.3ex}{\thead{Selective-Failure\\Attack Resilience}} 
& {Security Model}\\
\midrule

\cite{ke2023private} 
& $\ell-1$ 
& $\times$ 
& Statistical 
\\

\cite{eriguchi2022optimal} 
& $t$ 
& $\times$ 
& Statistical 
\\

\cite{colombo2023authenticated} 
& $t$ 
& $\checkmark$ 
& Computational 
\\

\cite{alon2025definition} 
& $t$ 
& $\checkmark$ 
& Statistical 
\\

Theorem \ref{theo:itapir-construction1} 
& $\ell-1$ 
& $\checkmark$ 
& Statistical 
\\

Theorem \ref{theo:itapir-construction2} 
& $t$ 
& $\checkmark$ 
& Statistical 
\\

\bottomrule
\end{tabularx}
\end{table}

\section{Preliminaries} \label{preliminaries}
{\bf Notations.  }
$\mathbb{N}$ denotes the set of natural numbers $\{1,2,3,\ldots\}$.
For any positive integer $n$, we denote $[n] = \{1,2,\ldots, n\}$ and $\{a_j\}_{j\in [n]}=\{a_1, a_2, \ldots, a_n\}$.
The indicator function $\mathds{1}\{\mathcal{P}\}$ for a proposition $\mathcal{P}$, which outputs $1$ if $\mathcal{P}$ holds true and $0$ otherwise.
For two probability distributions $X$ and $Y$ over the same finite set:
$X \equiv Y$ means $X$ and $Y$ are \emph{perfectly equivalent} (identical probability distributions), i.e., their statistical distance $\Delta(X,Y) = 0$;
$X \approx_s Y$ means $X$ and $Y$ are \emph{statistically indistinguishable}, i.e., their statistical distance $\Delta(X,Y)$ is negligible in the security parameter.
    \vspace{2mm}

\subsection{Information-Theoretic Authenticated PIR} \label{sec:it-apir}
%%%%%%%%%%%%%%%%%%%%%%%%%%%%%%%%%%%%%%%%%%%%%%%%%%%%%%%%%%%
The concept of Authenticated PIR \cite{colombo2023authenticated} was originally introduced with a computational security level. In this work, we present its definition in the \textit{information-theoretic setting}.

Informally, an $\ell$-server information-theoretic APIR (itAPIR) scheme involves $\ell$ servers $\{\mathcal{S}_j\}_{j \in [\ell]}$, each storing a copy of the same database ${\bf x} = (x_1, \ldots, x_n) \in (\{0,1\}^m)^n$ (each entry $x_i$ is an $m$-bit block), and a client seeking to retrieve $x_{\alpha}$ for some $\alpha \in [n]$. 
The scheme guarantees that the client can correctly recover $x_{\alpha}$ when all $\ell$ servers $\{\mathcal{S}_j\}_{j \in [\ell]}$ are honest and respond correctly, or output a special symbol $\perp$ to indicate incorrect responses. The scheme is said to be $t$-private in the information-theoretic sense: any coalition of up to $t$ malicious servers gains statistically no information about the client's retrieval index $\alpha$, even if the adversary learns whether the client accepts the retrieved result or outputs the rejection symbol $\perp$.

\begin{definition}[itAPIR] \label{def:itapir-scheme}
An $\ell$-server itAPIR scheme $\Gamma = ({\sf Que}, {\sf Ans}, {\sf Rec})$ consists of three algorithms executed by the client and servers as specified below:
\begin{itemize}    
    \item $(\{q_j\}_{j \in [\ell]}, {\sf aux}) \leftarrow {\sf Que}(1^{\kappa}, n, \alpha)$:  
    A randomized \emph{querying algorithm} executed by the client. It takes the security parameter $\kappa$, the database size $n$, and retrieval index $\alpha \in [n]$ as input, and outputs $\ell$ queries $\{q_j\}_{j \in [\ell]}$ (with $q_j$ sent to server $\mathcal{S}_j$) and auxiliary information ${\sf aux}$ for reconstruction.

    \item $a_j \leftarrow {\sf Ans}({\bf x}, q_j)$:  
    A deterministic \emph{answering algorithm} executed by server $\mathcal{S}_j$ ($j \in [\ell]$). It takes the database ${\bf x} = (x_1, \ldots, x_n) \in (\{0,1\}^m)^n$ and query $q_j$ as input, and outputs a response $a_j$. 
    
    \item $y \leftarrow {\sf Rec}(\{a_j\}_{j \in [\ell]}, {\sf aux})$:  
    A deterministic \emph{reconstructing algorithm} executed by the client. It takes the responses $\{a_j\}_{j \in [\ell]}$ and auxiliary information ${\sf aux}$ as input, and outputs either the correct value $y = x_{\alpha}$ (when all responses are valid) or the symbol $y = \perp$ (indicating at least one invalid response).
\end{itemize}
\end{definition}

To formalize the above goals, an itAPIR scheme $\Gamma$ must satisfy the following three properties:

\begin{definition}[itAPIR Correctness] \label{def:itapir-correctness}
Informally, $\Gamma$ is \emph{correct} if the reconstruction algorithm $\sf Rec$ outputs the correct value $x_\alpha$ when all $\ell$ servers respond honestly.  

Formally, for any security parameter $\kappa \in \mathbb{N}$, database size $n \in \mathbb{N}$, data entry size $m \in \mathbb{N}$, database ${\bf x} \in (\{0,1\}^m)^n$, any retrieval index $\alpha \in [n]$, and query-auxiliary pair $(\{q_j\}_{j \in [\ell]}, {\sf aux}) \leftarrow {\sf Que}(1^\kappa, n, \alpha)$, let $y \leftarrow {\sf Rec}\bigl(\{ {\sf Ans}({\bf x}, q_j)\}_{j \in [\ell]}, {\sf aux}\bigr)$. Then:
\[
\Pr\left[ y = x_\alpha \right] = 1.
\]
\end{definition}

\begin{definition}[itAPIR Statistical Privacy] \label{def:itapir-privacy}
Informally, $\Gamma$ is \emph{statistically $t$-private} if any coalition of up to $t$ malicious servers gains no information about the client's retrieval index $\alpha$, \emph{even if the adversary learns whether the client accepts the result or rejects it (outputs $\perp$)}. This captures security against selective-failure attacks, a critical strengthened privacy property of APIR.

Formally, let the subset ${\bf T} \subseteq [\ell]$ with cardinality $|{\bf T}| \leq t$ be the set of the indices of corrupted servers, that is, the adversary controls the servers $\{{\cal S}_{j}\}_{j\in {\bf T}}$. For any probabilistic computationally unbounded { two-stage} adversary $\mathcal{A}$ { with two polynomial-time algorithms $\mathcal{A}_0, \mathcal{A}_1$}, any database size $n\in \mathbb{N}$, any data entry size $m \in \mathbb{N}$, any database ${\bf x} \in (\{0,1\}^m)^n$, and any retrieval index $\alpha \in [n]$, define the real experiment:
\[
\mathsf{REAL}_{\alpha} = \left\{
\beta : 
\begin{aligned}
&(\{q_j\}_{j \in [\ell]}, {\sf aux}) \leftarrow {\sf Que}(1^\kappa, n, \alpha), \\
&(\mathsf{st}_{\mathcal{A}}, \{\tilde{a}_j\}_{j \in {\bf T}}) \leftarrow \mathcal{A}_0(\mathbf{x}, \{q_j\}_{j \in {\bf T}} ), \\
&a_j \leftarrow {\sf Ans}({\bf x}, q_j) \quad \forall j \in [\ell] \setminus {\bf T}, \\
&y \leftarrow {\sf Rec}(\{a_j\}_{j \notin {\bf T}} \cup \{\tilde{a}_j\}_{j \in {\bf T}}, {\sf aux}), \\
&b \leftarrow \mathds{1}\{ y \neq \perp \}, \\
&\beta \leftarrow \mathcal{A}_1(\mathsf{st}_{\mathcal{A}}, b)
\end{aligned}
\right\}.
\]
The scheme is statistically $t$-private if there exists a { two-stage} simulator $\mathsf{Sim}$ { with two polynomial algorithm} $\mathsf{Sim}_0, \mathsf{Sim}_1$ such that $\mathsf{REAL}_{\alpha} \approx_{s} \mathsf{IDEAL}_{\mathsf{Sim}}$ for all $\alpha \in [n]$, where $\mathsf{IDEAL}_{\mathsf{Sim}}$ is defined as:
\[
\mathsf{IDEAL}_{\mathsf{Sim}} = \left\{
\beta : 
\begin{aligned}
&(\mathsf{st}_{\mathsf{Sim}}, \{q_j'\}_{j \in {\bf T}}) \\
&\qquad \leftarrow \mathsf{Sim}_0(1^\kappa, n, \mathbf{x}, {\bf T}), \\
&(\mathsf{st}_{\mathcal{A}}, \{ \tilde{a}_j \}_{j \in {\bf T}}) \\
&\qquad \leftarrow \mathcal{A}_0(\mathbf{x}, \{q_j'\}_{j \in {\bf T}}), \\
&b \leftarrow \mathsf{Sim}_1(\mathsf{st}_{\mathsf{Sim}}, \{ \tilde{a}_j \}_{j \in {\bf T}}), \\
&\beta \leftarrow \mathcal{A}_1(\mathsf{st}_{\mathcal{A}}, b)
\end{aligned}
\right\}.
\]
\end{definition}
Regarding the absence of a ``perfectly private itAPIR‘’ definition, we note that such a notion (requiring $\mathsf{REAL}_\alpha \equiv \mathsf{IDEAL}_{\mathsf{Sim}}$ for all $\alpha$) is logically feasible in theory. However, our core conversion theorem (Theorem \ref{theo:pir-rv-to-apir}) reveals a fundamental constraint: the $(v, \epsilon)$-integrity of itPIR-RV introduces a negligible statistical error $\epsilon$, which propagates to the privacy guarantee of the upgraded itAPIR scheme. This makes perfect privacy unattainable via the itPIR-RV upgrade path. Thus, we focus only on statistical privacy for itAPIR in this work.

\begin{definition}[itAPIR Integrity] \label{def:itapir-integrity}
Informally, the scheme $\Gamma$ satisfies $(v, \epsilon)$-\emph{integrity} if no coalition of up to $v$ servers can deceive the client with retrieval index $\alpha$ into outputting a result $y \notin \{x_{\alpha}, \perp\}$ by providing fraudulent responses.  

Formally, for any adversary $\mathcal{A}$ controlling a set of servers ${\bf V} \subseteq [\ell]$ of cardinality at most $v$, any database ${\bf x}\in (\{0,1\}^m)^n$, and any index $\alpha \in [n]$, the probability that the client outputs an incorrect value is bounded by $\epsilon$:
\[
\Pr\left[ 
\begin{aligned}
y \notin \{x_\alpha,& \perp\}:\\ 
&(\{q_j\}_{j \in [\ell]}, {\sf aux}) \leftarrow {\sf Que}(1^\kappa, n, \alpha), \\
&\{ \tilde{a}_j \}_{j \in {\bf V}} \leftarrow \mathcal{A}({\bf x}, \{q_j\}_{j \in {\bf V}}), \\
&a_j \leftarrow {\sf Ans}({\bf x}, q_j) \quad \forall j \notin {\bf V}, \\
&y \leftarrow {\sf Rec}(\{a_j\}_{j \notin {\bf V}} \cup \{ \tilde{a}_j \}_{j \in {\bf V}}, {\sf aux})
\end{aligned}
\right] \leq \epsilon.
\]
\end{definition}

\subsection{PIR with Result Verification} \label{sec:pir-rv}
%%%%%%%%%%%%%%%%%%%%%%%%%%%%%%%%%%%%%%%%%%%%%%%%%%%%%%%%%%%
An information-theoretic $\ell$-server PIR with Result Verification (itPIR-RV) \cite{ke2022two, eriguchi2022optimal, ke2023private, li2025efficient, eriguchi2024efficient} scheme is a \textbf{relaxed variant} of the information-theoretic APIR (itAPIR) scheme (Section \ref{sec:it-apir}).
It shares the same correctness and $(v, \epsilon)$-integrity properties as itAPIR. The only key difference lies in privacy: unlike itAPIR (which requires statistical privacy against selective-failure attacks), itPIR-RV does \textbf{not} guarantee resistance to selective-failure attacks. 
Its privacy only ensures that colluding servers gain no information about $\alpha$ from query distributions alone (ignoring the client’s acceptance/rejection outcome), and includes two information-theoretic flavors: perfect privacy and statistical privacy.

\begin{definition}[itPIR-RV Correctness] \label{def:itpirrv-correctness}
Informally, $\Pi$ is \emph{correct} if the reconstruction algorithm $\sf Rec$ outputs the correct value $x_\alpha$ when all $\ell$ servers respond honestly.  

This property is identical to itAPIR correctness (Definition \ref{def:itapir-correctness}).
\end{definition}

\begin{definition}[itPIR-RV Perfect Privacy] \label{def:itpirrv-privacy-perfect}
Informally, an itPIR-RV scheme $\Pi = ({\sf Que}, {\sf Ans}, {\sf Rec})$ is \emph{perfectly $t$-private} if any coalition of up to $t$ servers gains \textbf{absolutely no information} about the client's retrieval index $\alpha$ from query distributions alone. No adversary—even with unbounded computational power—can distinguish query distributions for different $\alpha$.

Formally, for any security parameter $\kappa\in \mathbb{N}$, database size $n\in \mathbb{N}$, any index $\alpha \in [n]$, and any subset ${\bf T} \subseteq [\ell]$ with $|{\bf T}| \leq t$, let the query distribution:
\[
\mathsf{REAL}_{\alpha} = \left\{ \{q_j\}_{j\in {\bf T}} : (\{q_j\}_{j\in [\ell]}, {\sf aux})\leftarrow {\sf Que}(1^{\kappa}, n, \alpha)    \right\}.
\]
There exists a simulator $\mathsf{Sim}'$ such that for any $\alpha \in [n]$, the ideal distribution:
\[
\mathsf{IDEAL}_{\mathsf{Sim}'} = \left\{  \{q_j\}_{j \in {\bf T}} \leftarrow \mathsf{Sim}' (1^{\kappa}, n, {\bf T}) \right\},
\]
satisfies 
\[
    \mathsf{REAL}_{\alpha} \equiv \mathsf{IDEAL}_{\mathsf{Sim}'}.
\]
\end{definition}

\begin{definition}[itPIR-RV Statistical Privacy] \label{def:itpirrv-privacy-statistical}
Informally, an itPIR-RV scheme $\Pi = ({\sf Que}, {\sf Ans}, {\sf Rec})$ is \emph{statistically $t$-private} if any coalition of up to $t$ servers gains \textbf{statistically no information} about the client's retrieval index $\alpha$ from query distributions alone.

Formally, for any security parameter $\kappa\in \mathbb{N}$, database size $n\in \mathbb{N}$, any index $\alpha \in [n]$, and any subset ${\bf T} \subseteq [\ell]$ with $|{\bf T}| \leq t$, let $\mathsf{REAL}_{\alpha}$ and $\mathsf{IDEAL}_{\mathsf{Sim}'}$ be defined as in Definition \ref{def:itpirrv-privacy-perfect}. There exists a simulator $\mathsf{Sim}'$ such that for any $\alpha \in [n]$:
\[
    \mathsf{REAL}_{\alpha} \approx_{s} \mathsf{IDEAL}_{\mathsf{Sim}'}.
\]
\end{definition}

\begin{remark} \label{rem:itpirrv-privacy-relation}
Perfect $t$-privacy (Definition \ref{def:itpirrv-privacy-perfect}) is strictly stronger than statistical $t$-privacy (Definition \ref{def:itpirrv-privacy-statistical}): perfect distributional equivalence ($X \equiv Y$) implies statistical indistinguishability ($X \approx_s Y$), so any perfectly $t$-private itPIR-RV scheme is inherently statistically $t$-private.
An itPIR-RV scheme is deemed privacy-preserving if it satisfies either of these two definitions (only one is required).
\end{remark}

\begin{remark} \label{remark1}
The privacy definitions above (Definitions \ref{def:itpirrv-privacy-perfect} and \ref{def:itpirrv-privacy-statistical}) are equivalent to the standard formulation for information-theoretic PIR. Let $\sim$ denote $\equiv$ (perfect) or $\approx_s$ (statistical), where $\mathsf{REAL}_{\alpha_0} \sim \mathsf{REAL}_{\alpha_1}$ (for any distinct $\alpha_0, \alpha_1 \in [n]$) is the standard privacy condition.
Mutual equivalence holds:

(i) $\boldsymbol{(\Rightarrow)}$ Suppose a scheme satisfies either perfect or statistical $t$-privacy (per Definitions \ref{def:itpirrv-privacy-perfect}/\ref{def:itpirrv-privacy-statistical}), i.e., there exists a simulator $\mathsf{Sim}'$ such that $\mathsf{REAL}_{\alpha} \sim \mathsf{IDEAL}_{\mathsf{Sim}'}$ for all $\alpha \in [n]$. By transitivity of $\sim$ (equivalence or indistinguishability), we have $\mathsf{REAL}_{\alpha_0} \sim \mathsf{REAL}_{\alpha_1}$ for any distinct $\alpha_0, \alpha_1 \in [n]$.

(ii) $\boldsymbol{(\Leftarrow)}$ Suppose $\mathsf{REAL}_{\alpha_0} \sim \mathsf{REAL}_{\alpha_1}$ for any distinct $\alpha_0, \alpha_1 \in [n]$. Construct a simulator $\mathsf{Sim}'$ that samples from $\mathsf{REAL}_{\alpha}$ (for any fixed $\alpha \in [n]$). For any $\alpha' \in [n]$, $\mathsf{REAL}_{\alpha'} \sim \mathsf{REAL}_{\alpha}$, so $\mathsf{REAL}_{\alpha'} \sim \mathsf{IDEAL}_{\mathsf{Sim}'}$, satisfying the respective privacy definition.
\end{remark}

\begin{definition}[itPIR-RV Integrity] \label{def:itpirrv-integrity}
Informally, the scheme $\Pi$ satisfies $(v, \epsilon)$-\emph{integrity} if no coalition of up to $v$ servers can deceive the client with retrieval index $\alpha$ into outputting a result $y \notin \{x_{\alpha}, \perp\}$ by providing fraudulent responses.  

This property is identical to itAPIR integrity (Definition \ref{def:itapir-integrity}).
\end{definition}

\section{From itPIR-RV to itAPIR}
\label{sec:pir-rv-to-itapir}
%%%%%%%%%%%%%%%%%%%%%%%%%%%%%%%%%%%%%%%%%%%%%%%%%%%%%%%%%%%
In Sections \ref{sec:it-apir} and \ref{sec:pir-rv}, we defined itAPIR (Definition \ref{def:itapir-scheme}) and itPIR-RV (a relaxed itAPIR variant lacking selective-failure attack resistance). A natural question arises: \emph{Under what conditions can an itPIR-RV scheme upgrade to an itAPIR scheme, and how to derive practical itAPIR constructions}? This section addresses both via two contributions:

First, we prove a key conversion theorem: an itPIR-RV scheme satisfying specific integrity and privacy properties can be directly regarded as an itAPIR scheme, as it inherently achieves the strengthened privacy (resistance to selective-failure attacks) required by itAPIR. 
Second, using this theorem with existing efficient itPIR-RV schemes, we present two concrete itAPIR constructions to verify the theorem’s practicality.

Informally, the core insight of the conversion is as follows: The only gap between itPIR-RV and itAPIR lies in privacy—while itPIR-RV only guarantees query privacy (either perfect or statistical, itAPIR requires statistical privacy against selective-failure attacks. The $(v, \epsilon)$-integrity of itPIR-RV (with $v \geq t$) bridges this gap: it limits the probability that adversaries (controlling up to $t \leq v$ servers) can submit malicious answers that deceive the client into accepting an incorrect result (i.e., $y \notin \{x_\alpha, \perp\}$) to a negligible value $\epsilon$. 
Combining this integrity guarantee with the privacy of itPIR-RV, we show that the adversary’s view (including queries and the acceptance/rejection signal $b$) in the real itAPIR experiment is statistically indistinguishable from that in the ideal experiment—thus satisfying the full privacy requirement of itAPIR.

Formally, we state the conversion theorem below, followed by a rigorous proof to validate the above intuition.

\begin{theorem} \label{theo:pir-rv-to-apir} 
    Let $\Pi = ({\sf Que}, {\sf Ans}, {\sf Rec})$ be an $\ell$-server itPIR-RV scheme satisfying either perfect $t$-privacy (Definition \ref{def:itpirrv-privacy-perfect}) or statistical $t$-privacy (Definition \ref{def:itpirrv-privacy-statistical}). If $\Pi$ satisfies $(v, \epsilon)$-integrity (Definition \ref{def:itpirrv-integrity}) for $v \geq t$ and a negligible function $\epsilon$ in the security parameter $\kappa$, then $\Pi$ is an $\ell$-server statistically $t$-private itAPIR scheme that satisfies the selective-failure attack resistance required by itAPIR’s statistical privacy (Definition \ref{def:itapir-privacy}).
\end{theorem}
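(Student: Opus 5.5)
Correctness and integrity of $\Pi$ carry over verbatim, since Definitions \ref{def:itpirrv-correctness} and \ref{def:itpirrv-integrity} coincide with Definitions \ref{def:itapir-correctness} and \ref{def:itapir-integrity}. So the only thing to establish is the privacy of Definition \ref{def:itapir-privacy}. By Remark \ref{rem:itpirrv-privacy-relation} we may assume $\Pi$ is statistically $t$-private with simulator $\mathsf{Sim}'$.

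The plan is to build a simulator that does not know $\alpha$ but \emph{does} know $\mathbf{x}$ (which is given to $\mathsf{Sim}_0$), and that decides $b$ by checking whether the adversary's answers equal the honest ones.

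\emph{Construction of the simulator.}
\begin{itemize}
\item $\mathsf{Sim}_0(1^\kappa,n,\mathbf{x},\mathbf{T})$ samples $\{q_j'\}_{j\in\mathbf{T}}\leftarrow\mathsf{Sim}'(1^\kappa,n,\mathbf{T})$ and stores $\mathsf{st}_{\mathsf{Sim}}=(\mathbf{x},\{q_j'\}_{j\in\mathbf{T}})$.
\item $\mathsf{Sim}_1$ outputs $b=1$ iff $\tilde a_j={\sf Ans}(\mathbf{x},q_j')$ for every $j\in\mathbf{T}$, and $b=0$ otherwise.
\end{itemize}

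\emph{Hybrid argument.} I would compare the two experiments through one intermediate hybrid $\mathsf{H}_\alpha$. This is the real experiment with a single change: $b$ is computed by the same honest-answer check as $\mathsf{Sim}_1$, applied to the real queries $\{q_j\}_{j\in\mathbf{T}}$.

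\emph{Step 1: $\mathsf{H}_\alpha\approx_s\mathsf{IDEAL}_{\mathsf{Sim}}$.} In $\mathsf{H}_\alpha$, $\beta$ is obtained by applying a single randomized procedure $F$ to $\{q_j\}_{j\in\mathbf{T}}$. Here $F$ runs $\mathcal{A}_0$, applies the check, then runs $\mathcal{A}_1$. The ideal experiment is the same $F$ applied to $\{q'_j\}_{j\in\mathbf{T}}$. Statistical distance does not increase under any randomized function, so $\Delta(\mathsf{H}_\alpha,\mathsf{IDEAL}_{\mathsf{Sim}})\le\Delta(\mathsf{REAL}^{\Pi}_\alpha,\mathsf{IDEAL}_{\mathsf{Sim}'})$, which is negligible.

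\emph{Step 2: $\mathsf{REAL}_\alpha\approx_s\mathsf{H}_\alpha$.} The two experiments can be coupled to use identical randomness, so they differ only on the event $E$ that the real bit $b=\mathds{1}\{y\neq\perp\}$ differs from the check bit. Hence $\Delta\le\Pr[E]$, and I would bound $\Pr[E]$ by cases.
\begin{itemize}
\item If all $\tilde a_j$ are honest, correctness (Definition \ref{def:itapir-correctness}) gives $y=x_\alpha\neq\perp$, so both bits equal $1$.
\item If some $\tilde a_j$ is dishonest, $E$ means $y\neq\perp$. This splits into $y\notin\{x_\alpha,\perp\}$ and $y=x_\alpha$.
\end{itemize}
The first sub-case has probability at most $\epsilon$ by $(v,\epsilon)$-integrity, since $|\mathbf{T}|\le t\le v$ and $\mathcal{A}_0$ is a valid integrity adversary.

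\emph{Main obstacle.} The second sub-case is the hard part. Here the adversary alters answers yet the client still outputs $x_\alpha$, and integrity as defined does not forbid this. Such an accepted-but-modified answer could in principle depend on $\alpha$ (for example, corrupting an answer in a way that only matters for certain indices).

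My first attempt would be to change $\mathsf{Sim}_1$ so it no longer relies on exact equality with honest answers. Instead $\mathsf{Sim}_1$ outputs $b=1$ iff, with the queries resampled consistently with $\{q_j'\}$ and the honest answers from $\mathbf{x}$, reconstruction succeeds. This needs a conditional sampler for the remaining queries and $\mathsf{aux}$ given the corrupted queries. Since $\mathsf{Sim}_1$ may be unbounded in the information-theoretic setting, it could simply sample $\alpha$-independently from the conditional distribution under a fixed $\alpha_0$ and run ${\sf Rec}$.

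This reduces the problem to showing that, for every fixed $\{\tilde a_j\}$, $\Pr[y\neq\perp\mid q_{\mathbf{T}}]$ is statistically independent of $\alpha$. I would try to derive that from integrity by viewing $y\in\{x_\alpha,\perp\}$ as determined by acceptance. If this fails in general, I would fall back to stating the extra natural condition that any deviation is rejected except with probability $\epsilon$. That condition holds for the concrete schemes of \cite{ke2023private,eriguchi2022optimal} and makes the case $y=x_\alpha$ under deviation negligible, which completes $\mathsf{REAL}_\alpha\approx_s\mathsf{IDEAL}_{\mathsf{Sim}}$ with error at most $\epsilon$ plus a negligible term.
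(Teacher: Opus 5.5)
Your simulator and hybrids are the same as the paper's. Its $\mathrm{H}_1$ is your $\mathsf{H}_\alpha$, and its steps $\mathrm{H}_1\to\mathrm{H}_2\equiv\mathrm{H}_3$ are your Step~1, which is correct as you state it: it is data processing applied to the $\mathsf{Sim}'$ guarantee. The divergence is at Step~2. The paper closes it in one line: the two bits differ only if some $\tilde a_j$ deviates and ${\sf Rec}$ does not output $\perp$, ``i.e., the client accepts an incorrect value'', and integrity bounds this by $\epsilon$. That ``i.e.''\ silently discards exactly the sub-case you isolated, namely deviation with $y=x_\alpha$, which Definition~\ref{def:itpirrv-integrity} does not control. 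You were right not to accept it, and it cannot be patched, because the theorem is false as stated.

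Here is a counterexample. Take $\ell=2$, $t=v=1$, and let ${\sf Que}$ output constant (e.g.\ empty) queries with ${\sf aux}=\alpha$. Let ${\sf Ans}(\mathbf x,q_j)=\mathbf x$. Let ${\sf Rec}$ output the $\alpha$-th block of $a_1$ if it equals the $\alpha$-th block of $a_2$, and $\perp$ otherwise. This scheme is correct and perfectly $1$-private. It also has $(1,0)$-integrity, because the honest copy pins down $x_\alpha$. Now let $\mathcal{A}_0$ corrupt $\mathcal{S}_1$ and return $\mathbf x$ with $x_1$ altered, and let $\mathcal{A}_1$ output $b$. Then $\beta=\mathds{1}\{\alpha\neq 1\}$ deterministically, so no single $\mathsf{IDEAL}_{\mathsf{Sim}}$ is close to both $\mathsf{REAL}_1$ and $\mathsf{REAL}_2$.

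The same trick works on any itPIR-RV scheme, sublinear ones included, at additive cost $m$. Let every server append $x_1$ to its answer, and let ${\sf Rec}$ additionally output $\perp$ when $\alpha=1$ and an appended copy disagrees with its output. Correctness, query privacy and integrity are untouched. Yet an adversary that is honest except on the appended copy causes rejection iff $\alpha=1$.

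Consequently your first fix cannot work: in this example $\Pr[y\neq\perp\mid q_{\mathbf T}]$ is $0$ or $1$ depending on $\alpha$. Your resampling simulator is the right tool only under the hypothesis that this conditional acceptance probability is statistically $\alpha$-independent. That hypothesis is essentially selective-failure resistance itself, and it does not follow from integrity plus query privacy.

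Your fallback is the correct repair, but it proves a modified theorem. Add the hypothesis that $\Pr[\exists j\in\mathbf V:\tilde a_j\neq{\sf Ans}(\mathbf x,q_j)\wedge y\neq\perp]\le\epsilon$ for every coalition with $|\mathbf V|\le v$. Then your case analysis gives $\Pr[E]\le\epsilon$, and together with Step~1 this yields the claimed bound. This is precisely the assumption the paper's $\mathrm{H}_0\to\mathrm{H}_1$ step uses without stating it. The informal gloss on ${\sf Rec}$ in Definition~\ref{def:itapir-scheme} (``$\perp$ indicating at least one invalid response'') hints at it, but it is not part of the formal integrity definition.

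Finally, you assert without argument that this stronger condition holds for \cite{ke2023private,eriguchi2022optimal}. Theorems~\ref{theo:itapir-construction1} and~\ref{theo:itapir-construction2} rest on it, so it has to be verified against those constructions, not assumed.
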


\begin{proof}
To prove the theorem, we only need to show that $\Pi$ satisfies the statistical privacy requirement of itAPIR, i.e., privacy against selective-failure attacks. 

For any probabilistic computationally unbounded adversary $\mathcal{A} = (\mathcal{A}_0, \mathcal{A}_1)$ (controlling a subset ${\bf T} \subseteq [\ell]$ of servers with $|{\bf T}| \leq t$), let a simulator $\mathsf{Sim} = (\mathsf{Sim}_0, \mathsf{Sim}_1)$ proceed as follows—using the underlying simulator $\mathsf{Sim}'$ of scheme $\Pi$ (from itPIR-RV’s privacy definition (either Definition \ref{def:itpirrv-privacy-perfect} or \ref{def:itpirrv-privacy-statistical})):

\hspace{1cm}

\noindent
\begin{tabular}{@{\hspace{1mm}}p{0.49\columnwidth}@{\hspace{1mm}}@{\hspace{5mm}}p{0.42\columnwidth}@{}}
$\mathsf{Sim}_0(1^\kappa, n, {\bf x}, {\bf T})$ & $\mathsf{Sim}_1(\mathsf{st}_{\mathsf{Sim}}, \{\tilde{a}_j\}_{j \in {\bf T}})$ \\
\hline
\vspace{-0.5em} 
\begin{tabular}{@{}l@{\hspace{0.5em}}p{0.5\columnwidth}@{}}
    1: & $\{q_j\}_{j \in {\bf T}} \leftarrow \mathsf{Sim}'(1^\kappa, n,  {\bf T})$, \\
    2: & $\forall j \in {\bf T},~ a_j \leftarrow {\sf Ans}({\bf x}, q_j)$, \\
    3: & $\mathsf{st}_{\mathsf{Sim}} \leftarrow \{a_j\}_{j \in {\bf T}}$, \\
    4: & \textbf{return} $(\mathsf{st}_{\mathsf{Sim}}, \{q_j\}_{j \in {\bf T}}).$
\end{tabular}
&
\vspace{-0.5em}
\begin{tabular}{@{}l@{\hspace{0.5em}}p{0.5\columnwidth}@{}}
    1: & $\{a_j\}_{j \in {\bf T}}\leftarrow \mathsf{st}_{\mathsf{Sim}}$, \\
    2: & $\forall j \in {\bf T},~ \delta_j = \tilde{a}_j - a_j$, \\
    3: & $b \leftarrow \mathds{1}\{ \wedge_{j\in {\bf T}} (\delta_j = 0) \}$, \\
    4: & \textbf{return} $b$.
\end{tabular}
\end{tabular}

\hspace{1cm}

To bridge the real and ideal privacy experiments and verify indistinguishability, we define a sequence of hybrid experiments $\mathrm{H}_0, \mathrm{H}_1, \mathrm{H}_2, \mathrm{H}_3$ as follows:

\begin{itemize}
    \item \textbf{$\mathrm{H}_0$}: This is the real privacy experiment $\mathsf{REAL}_{\alpha}$ (Definition \ref{def:itapir-privacy}). The challenger runs the protocol honestly for the retrieval index $\alpha$. $\mathcal{A}_0$ receives queries for the corrupted servers and returns potentially malformed answers. $\mathcal{A}_1$ receives a bit $b$ indicating whether the client accepted the result.
    \begin{align*}
    \mathrm{H}_0 = \left\{
    \beta : 
    \begin{aligned}
        &(\{q_j\}_{j \in [\ell]}, {\sf aux}) \leftarrow {\sf Que}(1^\kappa, n, \alpha), \\
        &\forall j \in [\ell] , ~~a_j \leftarrow {\sf Ans}({\bf x}, q_j), \\
        &(\mathsf{st}_{\mathcal{A}}, \{\tilde{a}_j\}_{j \in {\bf T}}) \leftarrow \mathcal{A}_0({\bf x}, \{q_j\}_{j \in {\bf T}}), \\
        &y \leftarrow {\sf Rec}(\{a_j\}_{j \in [\ell] \setminus {\bf T}} \cup \{\tilde{a}_j\}_{j \in {\bf T}}, {\sf aux}), \\
        &b \leftarrow \mathds{1}\{ y \neq \perp \}, \\
        &\beta \leftarrow \mathcal{A}_1(\mathsf{st}_{\mathcal{A}}, b)
    \end{aligned}
    \right\}.
    \end{align*}

    \item \textbf{$\mathrm{H}_1$}: Same as $\mathrm{H}_0$ except for the computation of the acceptance bit $b$: instead of determining $b$ via the client’s reconstruction algorithm ${\sf Rec}$ (checking if $y \neq \perp$), the challenger computes $b$ by verifying whether the adversary’s answers $\{\tilde{a}_j\}_{j \in {\bf T}}$ match the honest answers $\{a_j\}_{j \in {\bf T}}$ (generated from the same queries). The bit $b$ is set to $1$ if and only if all answers are identical.
    \begin{align*}
    \mathrm{H}_1 = \left\{
    \beta : 
    \begin{aligned}
        &(\{q_j\}_{j \in [\ell]}, {\sf aux}) \leftarrow {\sf Que}(1^\kappa, n, \alpha), \\
        &\forall j \in {\bf T} , ~~a_j \leftarrow {\sf Ans}({\bf x}, q_j), \\
        &(\mathsf{st}_{\mathcal{A}}, \{\tilde{a}_j\}_{j \in {\bf T}}) \leftarrow \mathcal{A}_0({\bf x}, \{q_j\}_{j \in {\bf T}}), \\
        &\forall j \in {\bf T}, ~~\delta_j = \tilde{a}_j - a_j, \\
        &b \leftarrow \mathds{1}\{ \wedge_{j\in {\bf T}} (\delta_j = 0) \}, \\
        &\beta \leftarrow \mathcal{A}_1(\mathsf{st}_{\mathcal{A}}, b)
    \end{aligned}
    \right\}.
    \end{align*}

    \item \textbf{$\mathrm{H}_2$}: Same as $\mathrm{H}_1$ except for the generation of queries $\{q_j\}_{j \in {\bf T}}$: instead of using the honest query algorithm  
    \begin{align*}
    \mathrm{H}_2 = \left\{
    \beta : 
    \begin{aligned}
        & \{q_j\}_{j \in {\bf T}} \leftarrow \mathsf{Sim'}(1^\kappa , n, {\bf T}), \\
        &\forall j \in {\bf T} , ~~a_j \leftarrow {\sf Ans}({\bf x}, q_j), \\
        &(\mathsf{st}_{\mathcal{A}}, \{\tilde{a}_j\}_{j \in {\bf T}}) \leftarrow \mathcal{A}_0({\bf x}, \{q_j\}_{j \in {\bf T}}), \\
        &\forall j \in {\bf T}, ~~\delta_j = \tilde{a}_j - a_j, \\
        &b \leftarrow \mathds{1} \left\{ \wedge_{j\in {\bf T}} (\delta_j = 0) \right\}, \\
        &\beta \leftarrow \mathcal{A}_1(\mathsf{st}_{\mathcal{A}}, b)
    \end{aligned}
    \right\}.
    \end{align*}    
    ${\sf Que}$ (which depends on the retrieval index $\alpha$), queries
    are generated via the simulator $\mathsf{Sim}'$ (from itPIR-RV’s privacy definition (either Definition \ref{def:itpirrv-privacy-perfect} or \ref{def:itpirrv-privacy-statistical})).

    \item \textbf{$\mathrm{H}_3$}: Identical in structure to $\mathrm{H}_2$, and corresponds to the ideal privacy experiment $\mathsf{IDEAL}_{\mathsf{Sim}}$ (Definition \ref{def:itapir-privacy}). This experiment operates without access to the client’s retrieval index $\alpha$.
    \begin{align*}
    \mathrm{H}_3 = \left\{
    \begin{aligned}
        \beta &: \\
        &(\mathsf{st}_{\mathsf{Sim}}, \{q_j\}_{j \in {\bf T}}) \leftarrow \mathsf{Sim}_0(1^\kappa, n, {\bf x}, {\bf T}), \\
        &(\mathsf{st}_{\mathcal{A}}, \{\tilde{a}_j\}_{j \in {\bf T}}) \leftarrow \mathcal{A}_0({\bf x}, \{q_j\}_{j \in {\bf T}}), \\
        &b \leftarrow \mathsf{Sim}_1(\mathsf{st}_{\mathsf{Sim}}, \{\tilde{a}_j\}_{j \in {\bf T}}), \\
        &\beta \leftarrow \mathcal{A}_1(\mathsf{st}_{\mathcal{A}}, b)
    \end{aligned}
    \right\}.
    \end{align*}
\end{itemize}

We now prove the indistinguishability of the hybrid experiments. To quantify this, we first define an event for each hybrid: For \( j \in \{0,1,2,3\} \), let \( W_j \) be the event that the output of the hybrid experiment \( \mathrm{H}_j \) is “1”. 

\begin{itemize}
    \item 
    Hybrid \( \mathrm{H}_1 \) is identical to \( \mathrm{H}_0 \) except for how the acceptance bit \( b \) is set: \( \mathrm{H}_0 \) sets \( b = \mathds{1}\{y \neq \perp\} \), while \( \mathrm{H}_1 \) sets \( b = \mathds{1}\{\wedge_{j\in {\bf T}} (\delta_j = 0)\} \).
    The two hybrids differ only if the adversary submits incorrect answers (\( \exists j \in {\bf T}, \tilde{a}_j \neq a_j \)) but the client’s \( {\sf Rec} \) algorithm fails to output \( \perp \) (i.e., the client accepts an incorrect value). By the $(v, \epsilon)$-integrity property of itPIR-RV (Definition \ref{def:itpirrv-integrity}), the probability of this event is bounded by \( \epsilon \). Thus:
    \[ |\Pr[W_0] - \Pr[W_1]| \leq \epsilon. \]

    \item 
    Hybrid \( \mathrm{H}_2 \) differs from \( \mathrm{H}_1 \) only in the generation of queries \( \{q_j\}_{j \in {\bf T}} \): \( \mathrm{H}_1 \) uses the real algorithm \( {\sf Que}(1^\kappa, n, \alpha) \), while \( \mathrm{H}_2 \) uses the simulator \( \mathsf{Sim}' \) from itPIR-RV’s privacy definition (either Definition \ref{def:itpirrv-privacy-perfect} or \ref{def:itpirrv-privacy-statistical}).

    By the itPIR-RV scheme’s $t$-privacy guarantee, the two query distributions are statistically indistinguishable, with perfect equivalence being a strict special case of statistical indistinguishability. Since the adversary’s entire view is determined solely by these queries, the hybrids’ output distributions are also statistically indistinguishable, giving:
    \[
        |\Pr[W_1] - \Pr[W_2]| \leq \mathrm{negl}(\kappa) \quad (\text{i.e., } \mathrm{H}_1 \approx_s \mathrm{H}_2),
    \]
    where $\mathrm{negl}(\kappa)$ denotes a negligible function in $\kappa$.

    \item 
    Hybrid \( \mathrm{H}_3 \) is fully equivalent to \( \mathrm{H}_2 \): \( \mathrm{H}_2 \)’s query generation matches \( \mathsf{Sim}_0 \), and its bit \( b \) computation corresponds to \( \mathsf{Sim}_1 \). All operations in \( \mathrm{H}_2 \) map one-to-one to \( \mathrm{H}_3 \), so they implement the same random process:
    \[
        \Pr[W_2] = \Pr[W_3] \quad (\text{i.e., } \mathrm{H}_2 \equiv \mathrm{H}_3).
    \]

\end{itemize}

Combining the indistinguishability of consecutive hybrids, we have:
\[
    |\Pr[W_0] - \Pr[W_3]| \leq \epsilon + \mathrm{negl}(\kappa) = \mathrm{negl}_2(\kappa),
\]
where the inequality follows from the hybrid bounds, and the equality holds as $\epsilon$ is negligible (by the theorem’s premise) and sums of negligible functions are negligible ($\mathrm{negl}_2(\kappa)$ denotes the combined function).

Recall that $\mathsf{REAL}_{\alpha} = \mathrm{H}_0$ and $\mathsf{IDEAL}_{\mathsf{Sim}} = \mathrm{H}_3$. This directly implies that
\[
    \mathsf{REAL}_{\alpha} \approx_s \mathsf{IDEAL}_{\mathsf{Sim}}.
\]
We thus conclude that the scheme $\Pi$ satisfies the statistical privacy requirement of itAPIR (Definition \ref{def:itapir-privacy}), including resistance to selective-failure attacks. This completes the proof of the theorem.

\end{proof}

\begin{theorem} \label{theo:itapir-construction1}
There exists an $\ell$-server itAPIR scheme satisfying the following properties:
    \begin{itemize}
        \item Statistical $t$-privacy against selective-failure attacks,
        \item $(v, \epsilon)$-integrity,
        \item $O\left( \frac{\ell^2}{t} \left( \frac{n\ell}{t} \right)^{1/(\lfloor (2\ell-1)/t \rfloor - 1)} \log p \right)$ communication complexity,
    \end{itemize}
    where $2 \leq \ell \leq 5$, $1 \leq t < \ell$, $v = \ell-1$, $\epsilon = \frac{3}{p-2}$, and $p$ denotes the size of a finite field.
\end{theorem}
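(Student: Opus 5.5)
The plan is to instantiate Theorem~\ref{theo:pir-rv-to-apir} with the existing itPIR-RV scheme of \cite{ke2023private}, the row of Table~\ref{tab:apir-comparison} with $v=\ell-1$. First I will recall the relevant construction. For $2\le \ell\le 5$ servers it achieves perfect $t$-privacy for $1\le t<\ell$ and $(\ell-1,\epsilon)$-integrity with $\epsilon = 3/(p-2)$ over a finite field $\mathbb{F}_p$. Its communication complexity is $O\bigl(\frac{\ell^2}{t}(\frac{n\ell}{t})^{1/(\lfloor(2\ell-1)/t\rfloor-1)}\log p\bigr)$. I will restate these three properties as a lemma cited from \cite{ke2023private}, checking that the correctness and integrity notions used there coincide with Definitions~\ref{def:itpirrv-correctness} and~\ref{def:itpirrv-integrity}. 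I will also check that its privacy notion matches Definition~\ref{def:itpirrv-privacy-perfect}, via the equivalence in Remark~\ref{remark1}, since that work likely states privacy in the standard form $\mathsf{REAL}_{\alpha_0}\equiv\mathsf{REAL}_{\alpha_1}$.

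Next, I will check the hypotheses of Theorem~\ref{theo:pir-rv-to-apir}. The condition $v\ge t$ holds because $v=\ell-1\ge t$ whenever $t<\ell$. For negligibility of $\epsilon$, I will tie the field size to the security parameter, for instance by choosing a prime $p$ with $p\ge 2^\kappa$. Then $\epsilon=3/(p-2)=O(2^{-\kappa})$ is negligible in $\kappa$, while $\log p=\Theta(\kappa)$ appears only as a multiplicative factor in the communication bound. The database blocks can be embedded in $\mathbb{F}_p$, or split into field elements, without changing the asymptotic form. Theorem~\ref{theo:pir-rv-to-apir} then yields statistical $t$-privacy against selective-failure attacks in the sense of Definition~\ref{def:itapir-privacy}. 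Integrity and correctness carry over unchanged, because the scheme itself is not modified and the conversion has no overhead. The communication complexity is therefore inherited verbatim.

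The main obstacle is the parameter bookkeeping in the first step. I must confirm that the cited scheme genuinely tolerates $v=\ell-1$ colluding malicious servers for every $t<\ell$ in the range $2\le\ell\le5$. I must also confirm that its soundness error is exactly $3/(p-2)$ uniformly over all adversaries, as Definition~\ref{def:itapir-integrity} requires, and not only over restricted error patterns. If the original statement covers only particular $(\ell,t)$ pairs, I would treat the remaining cases by restricting to the parameter region stated there.
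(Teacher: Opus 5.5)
Your outline is the paper's proof: quote the parameters of \cite{ke2023private} and apply Theorem~\ref{theo:pir-rv-to-apir}. Your remark that $p$ must be tied to $\kappa$ (e.g.\ $p\ge 2^{\kappa}$) so that $\epsilon=3/(p-2)$ is negligible makes explicit a hypothesis the paper leaves implicit. The genuine gap is in the black box that carries the whole argument. Selective-failure privacy does not follow from query privacy plus $(v,\epsilon)$-integrity, and the step $\mathrm{H}_0\to\mathrm{H}_1$ in the proof of Theorem~\ref{theo:pir-rv-to-apir} is wrong. The two acceptance bits differ on the event that $\tilde a_j\neq a_j$ for some $j\in{\bf T}$ while $y\neq\perp$. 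Integrity bounds only the part of this event where $y\notin\{x_\alpha,\perp\}$. It says nothing about the part where the coalition deviated and ${\sf Rec}$ still returned $x_\alpha$. The probability of that part may depend on $\alpha$, which is exactly a selective-failure attack. A minimal counterexample: let $\mathcal{S}_1$ append a constant bit to its answer, which ${\sf Rec}$ checks only when $\alpha\le n/2$. Correctness, query privacy and integrity are unchanged, yet a corrupted $\mathcal{S}_1$ that flips the bit learns $b=\mathds{1}\{\alpha>n/2\}$.

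To prove the statement you therefore need a property of the specific scheme beyond integrity. One option is that any deviation $\{\tilde a_j\}_{j\in{\bf T}}\neq\{a_j\}_{j\in{\bf T}}$ makes ${\sf Rec}$ output $\perp$ except with negligible probability; then $\mathrm{H}_0\approx_s\mathrm{H}_1$ really holds. At minimum, every corrupted strategy must be accepted with probability statistically independent of $\alpha$ given $\{q_j\}_{j\in{\bf T}}$. This is not automatic for \cite{ke2023private}. Suppose, as its communication bound suggests, that scheme is of Woodruff--Yekhanin type: $\mathcal{S}_j$ returns $F(z_j)$ and $\nabla F(z_j)$ at $z_j=\gamma(\lambda_j)$ on a degree-$t$ curve with $\gamma(0)=E(\alpha)$, while the client uses only $\langle\nabla F(z_j),\gamma'(\lambda_j)\rangle$. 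Then for $|{\bf T}|=t$ the tangent $\gamma'(\lambda_j)$ lies in the span of $\{z_i-E(\alpha)\}_{i\in{\bf T}}$. Perturb $\nabla F(z_j)$ by a vector orthogonal to every $z_i-E(\alpha_0)$ but not to $E(\alpha_0)-E(\alpha_1)$. When $\alpha=\alpha_0$ this perturbation is invisible, so the client accepts. When $\alpha=\alpha_1$ it changes the directional derivative and is rejected by the interpolation check. The real obstacle is establishing strong rejection of every deviation (or enforcing it by modifying the scheme), not the parameter bookkeeping.
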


\begin{proof}
The desired itAPIR scheme is constructed by leveraging our conversion theorem (Theorem \ref{theo:pir-rv-to-apir}) and the information-theoretic PIR-RV (itPIR-RV) construction in \cite{ke2023private}. 
\end{proof}

\begin{theorem} \label{theo:itapir-construction2}
There exists an $\ell$-server itAPIR scheme satisfying the following properties:
\begin{itemize}
    \item Statistical $t$-privacy against selective-failure attacks,
    \item $(v, \epsilon)$-integrity,
    \item $n^{\lfloor(2(\ell-t)-1)/t\rfloor^{-1}} (\log n + \log \epsilon^{-1}) (\ell-t)^{1+o(1)} \ell^{2+o(1)}$ communication complexity,
\end{itemize}
where $\ell \geq 2 $, $1 \leq t <  \ell/2$, $v = t$, $\epsilon > 0$.
\end{theorem}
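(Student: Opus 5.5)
We obtain the scheme by instantiating Theorem~\ref{theo:pir-rv-to-apir} with an existing itPIR-RV construction whose integrity threshold equals the privacy threshold, $v=t$. The natural candidate is the $t$-private, $v=t$ error-detecting PIR of Eriguchi et al.~\cite{eriguchi2022optimal}, listed in Table~\ref{tab:apir-comparison}. The plan is:
\begin{enumerate}
    \item Recall the parameters of that scheme.
    \item Check that they match the hypotheses of Theorem~\ref{theo:pir-rv-to-apir}.
    \item Read off integrity and communication, which are unchanged by the conversion.
\end{enumerate}

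First, we recall the guarantees of the construction. For $\ell\ge 2$ and $1\le t<\ell/2$, it is perfectly (hence, by Remark~\ref{rem:itpirrv-privacy-relation}, statistically) $t$-private. It also satisfies $(t,\epsilon)$-integrity for any prescribed $\epsilon>0$. Its communication is $n^{\lfloor(2(\ell-t)-1)/t\rfloor^{-1}}(\log n+\log\epsilon^{-1})(\ell-t)^{1+o(1)}\ell^{2+o(1)}$. The integrity is achieved by authenticating the answer with a random-tag/MAC-style check over a field of size about $\epsilon^{-1}$, which accounts for the additive $\log\epsilon^{-1}$ term. The condition $t<\ell/2$ is what leaves $\ell-t>t$ honest servers; these are enough both to reconstruct and to detect any deviation by the $t$ corrupted servers. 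This matches the stated complexity exactly, so the integrity and communication bullets follow immediately.

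Second, we verify the hypotheses of Theorem~\ref{theo:pir-rv-to-apir}. We need $t$-privacy, which holds. We need $v\ge t$, which holds since $v=t$. We need $\epsilon$ to be negligible in $\kappa$. Since the scheme allows arbitrary $\epsilon>0$ at cost $\log\epsilon^{-1}$, we set $\epsilon=2^{-\kappa}$, or any negligible function. The communication then carries a $\kappa$ term in place of $\log\epsilon^{-1}$, and the stated formula, being expressed in terms of $\epsilon$, still applies verbatim. Theorem~\ref{theo:pir-rv-to-apir} then yields statistical $t$-privacy against selective-failure attacks with no overhead, because the scheme itself is unchanged.

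The main obstacle is the interface between the two results rather than any new computation. The cited construction must be shown to fit Definitions~\ref{def:itpirrv-privacy-perfect} and~\ref{def:itpirrv-integrity} as stated here: a single-round protocol with deterministic ${\sf Ans}$, and integrity holding for every adversary controlling up to $t$ servers. We must also handle the fact that the theorem is stated for arbitrary $\epsilon>0$, while Theorem~\ref{theo:pir-rv-to-apir} needs $\epsilon$ negligible. I would resolve the latter by stating that statistical privacy is meant for the regime $\epsilon=\epsilon(\kappa)$ negligible. For non-negligible $\epsilon$, the hybrid argument of Theorem~\ref{theo:pir-rv-to-apir} still gives a distinguishing advantage of at most $\epsilon$ plus the privacy error.
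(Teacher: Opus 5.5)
Your proposal is essentially the paper's proof, which is just the instantiation of Theorem~\ref{theo:pir-rv-to-apir} with the homomorphic-MAC-based scheme of \cite{eriguchi2022optimal} (so $v=t\ge t$); your explicit choice of a negligible $\epsilon$, e.g.\ $\epsilon=2^{-\kappa}$, usefully makes precise a hypothesis that the paper's one-line proof leaves implicit. One caveat you inherit from the paper: the $\mathrm{H}_0\to\mathrm{H}_1$ step of Theorem~\ref{theo:pir-rv-to-apir} tacitly assumes that any deviation $\tilde{a}_j\neq a_j$ forces ${\sf Rec}$ to output $\perp$ (integrity alone does not exclude a deviation after which the client still outputs $x_\alpha$), whereas the cited scheme is, as far as I recall, statistically $t$-error-correcting for $t<\ell/2$ and so corrects rather than rejects such answers — if that is right, the sound way to close your ``interface'' step is the direct one: $b=1$ except with probability $\epsilon$, and $\mathsf{Sim}_0$ built from $\mathsf{Sim}'$ together with the constant $\mathsf{Sim}_1\equiv 1$ already yields statistical $t$-privacy against selective failure.
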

\begin{proof}
The desired itAPIR scheme is constructed by leveraging our conversion theorem (Theorem \ref{theo:pir-rv-to-apir}) and the homomorphic MAC-based PIR construction in \cite{eriguchi2022optimal}. 
\end{proof}

% \noindent\textbf{Summary of This Section.}
% We establish a rigorous conversion bridge between itPIR-RV and itAPIR: any perfect/statistical $t$-private itPIR-RV scheme with $(v,\epsilon)$-integrity ($v \geq t$, $\epsilon$ negligible) upgrades to a statistically $t$-private itAPIR scheme, and we present two concrete constructions via existing itPIR-RV designs \cite{ke2023private, eriguchi2022optimal} to verify the theorem’s practicality.

{
% \color{blue}
\section{Conclusion}
We introduced information-theoretic Authenticated PIR (itAPIR), formalized its relation to itPIR-RV, and showed that any suitable itPIR-RV scheme can be converted into a secure itAPIR scheme. Based on this result, we obtained two concrete itAPIR constructions for small-server and general-server settings, both achieving sublinear communication complexity.

Overall, our work clarifies the landscape of information-theoretic malicious-resilient PIR and provides an assumption-free approach to constructing itAPIR from existing itPIR-RV schemes \cite{ke2023private, eriguchi2022optimal}, with potential relevance to quantum-resistant PIR in malicious environments.
}

\newpage

\newpage

\bibliographystyle{IEEEtran}
\bibliography{ref}

\vspace{12pt}

\end{document}